\def\dint{{\rm \ d}}
\newtheoremstyle{slplain}
  {3pt}
  {3pt}
  {\slshape}
  {}
  {\bfseries}
  {.}%
  { }
  {}
\theoremstyle{slplain}
\newtheorem{thm}{Theorem}
\newtheorem{cor}{Corollary}
\begin{document}

\title{Connectivity of Millimeter Wave Networks\\ with Multi-hop Relaying}

\author{
\IEEEauthorblockA{Xingqin Lin and Jeffrey G. Andrews}
\thanks{Xingqin Lin and Jeffrey G. Andrews are with the Department of Electrical $\&$ Computer Engineering, The University of Texas at Austin, TX, USA. (Email: xlin@utexas.edu, jandrews@ece.utexas.edu). %Date revised: \today. }
}
}

\maketitle

\begin{abstract}
This paper presents a novel stochastic geometry approach to the connectivity of milimeter wave (mmWave) networks with multi-hop relaying. The random positions and shapes of obstacles in the radio environment are modeled as a Boolean model, whose germs are distributed according to a Poisson point process and grains are random rectangles.  The derived analytical results shed light on how the connectivity of mmWave networks depends on key system parameters such as the density and size of obstacles as well as \textit{relaying route window} -- the range of distances in which routing relays are selected. We find that multi-hop relaying can greatly improve the connectivity versus single hop mmWave transmission. We show that to obtain near-optimal connectivity the relaying route window should be about the size of the obstacles.
\end{abstract}

%\begin{IEEEkeywords}
%Device-to-device relaying, millimeter wave, connectivity, relay selection, stochastic geometry.
%\end{IEEEkeywords}

%\IEEEpeerreviewmaketitle

\section{Introduction}

The electromagnetic spectrum from 30 to 300 GHz is known as the milimeter wave (mmWave) band.
% -- a name due to the fact that the radio wavelengths of this band range from one to ten millimeters. 
Traditionally, the mmWave band has been commonly used in radio astronomy, remote sensing and radar applications \cite{daniels2014millimeter}. More recently, there has been a surge of interest in utilizing the large chunks of spectrum resources in the mmWave band for telecommunications, mainly because the current spectrum resources below mmWave frequencies are not sufficient to meet the exponentially growing demand for wireless traffic \cite{rangan2014millimeter}.

Despite the great potential of mmWave band for telecommunications, making it work is challenging since mmWave propagation characteristics are quite different from those of the spectrum below 5 GHz \cite{pi2011introduction, rangan2014millimeter}. In particular, mmWaves are susceptible to blockage as they have limited diffraction ability and cannot penetrate many solid materials. Therefore, the modeling and analysis of blockage are critical in mmWave system design. The work \cite{bai2013analysis} proposes to use a random Boolean model \cite{baccelli2009stochastic} for the spatially distributed blocking obstacles, i.e., the positions of the obstacles are modeled by a Poisson point process (PPP) and the shapes of the obstacles are random rectangles. The analysis however is restricted to single-hop transmission.

%How to overcome these challenges as well as many others is currently being investigated in both academia and industry \cite{pi2011introduction, rangan2014millimeter}.

%Despite the great potentials of mmWave band for telecommunications, making it work is challenging since the propagation characteristics of mmWave band is quite different from those of microwave band \cite{rangan2014millimeter}. On the one hand, radio waves in mmWave band suffer from high free space loss,  atmospheric absorption, and higher noise floor associated with larger bandwidth, limiting the range of mmWave transmission. On the other hand, radio waves in mmWave band are susceptible to blockage as they have limited diffraction ability and cannot penetrate most solid materials. How to overcome these challenges as well as many others is currently being investigated in both academia and industry \cite{pi2011introduction, rangan2014millimeter}.

%, as well as standardization bodies include IEEE 802.11ad \cite{vaughan2010gigabit} and 3GPP LTE-A.

Multi-hop relaying has the potential to extend mmWave coverage and network connectivity \cite{rangan2014millimeter, singh2009blockage}. For example, suppose a source would like to communicate to a destination using mmWave transmission. The communication would be unsuccessful if the mmWave signal arriving at the destination is not strong enough due to the various blockages in the environment. Instead, relaying via e.g., device-to-device communication \cite{lin2013overview} may help the mmWave signal turn around obstacles and set up the connection provided there exists a feasible path \cite{singh2009blockage}. The diffraction model proposed in \cite{singh2009blockage}  is based on ray tracing and is of high complexity when applied to the analysis and simulation of large mmWave networks.

In this paper, we take a stochastic geometry approach and model the obstacles by a Boolean model, as done in \cite{bai2013analysis}. We present a novel analysis on the fundamental performance of the connectivity of mmWave networks with multi-hop relaying. 
%The derived analytical results shed light on how to increase the connectivity of mmWave networks.
The derived results show that multi-hop relaying can greatly improve the connectivity versus single hop mmWave transmission. To obtain near-optimal connectivity, we find that the range of distances in which routing relays are selected should be about the size of the obstacles.

\section{Models and Assumptions}

Consider an arbitrary source-destination pair whose source and destination are respectively located at $x$ and $y$. The source $x$ communicates with the destination $y$ via directional mmWave signals. We assume that the mmWave signals cannot penetrate obstacles. If the LOS path between $x$ and $y$ is blocked, the directional mmWave signals are routed by intermediate relays to turn around obstacles. We assume that the helping relays are selected from the pool of potential relays located in the rectangular strip $R_\kappa (x, y)$, which has width $\kappa $ and is symmetric with respect to $x\to y$. See Fig. \ref{fig:1} for an illustration. The physical meaning of $\kappa$ is the range of area in which routing relays are selected. Intuitively, the larger the width $\kappa$, the higher the routing complexity.
%Restricting the relay selection to the rectangular strip $R_\kappa  (x, y)$ helps reduce complexity as the routing decision can be made only based on local information about the relays located in $R_\kappa  (x, y)$. 
We term the width $\kappa$ the \textit{relaying route window}.

Note that even with routing using intermediate relays in $R_\kappa  (x, y)$, it is still possible that no path exists which can route  the directional mmWave signals from the source $x$ to the destination $y$ due to the blockage of obstacles, as in Fig. \ref{fig:1}(b). In this paper, we explore how the choice of relaying route window $\kappa $ affects the connectivity of a typical source-destination pair. Intuitively, we can select a larger $\kappa $ to make the blockage probability smaller because a larger $\kappa $ results in a higher probability of finding a path from the source $x$ to the destination $y$ in the presence of obstacles. 

\begin{figure}
\centering
\includegraphics[width=8cm]{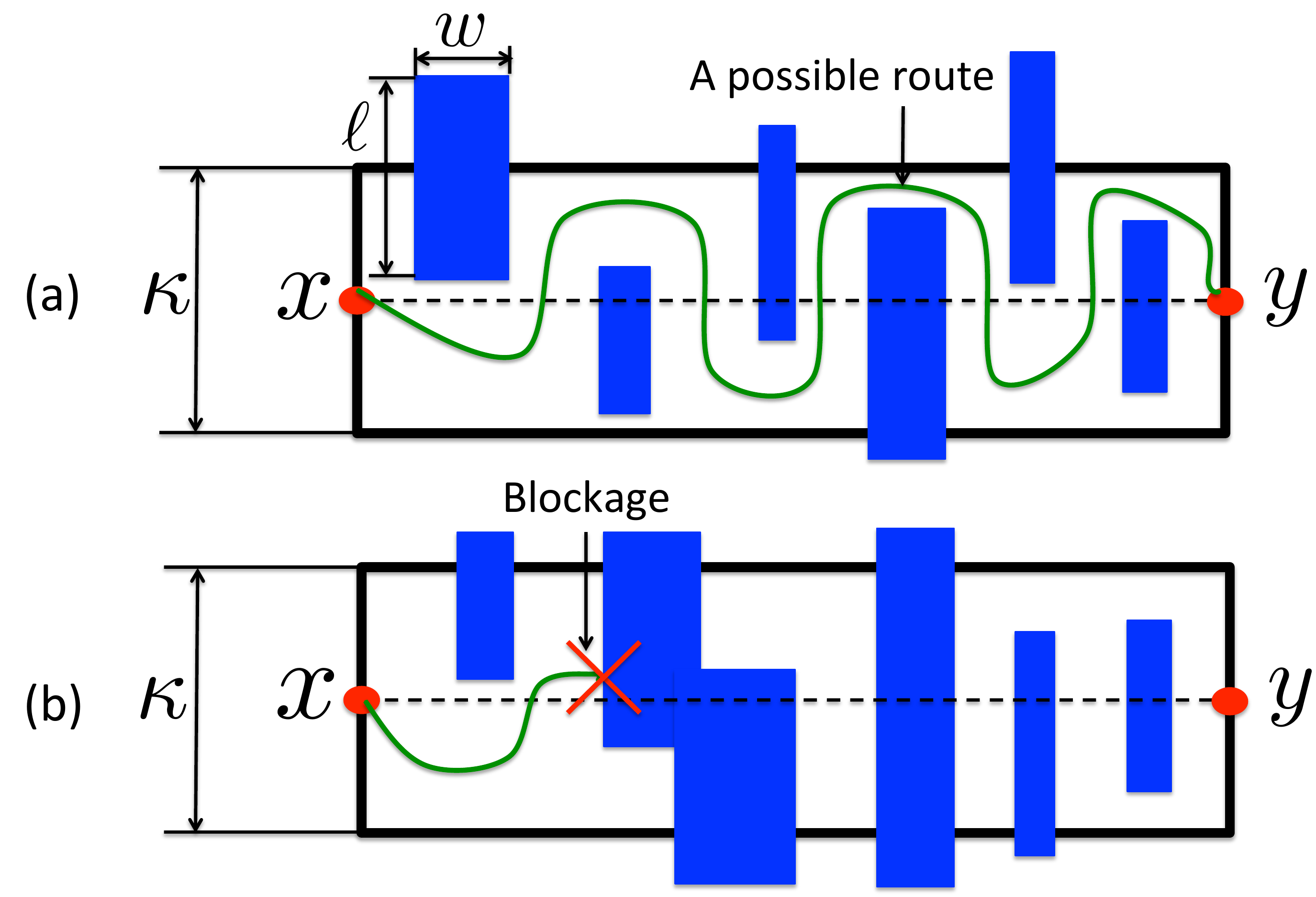}
%\caption{Two sample realizations of the Boolean blockage model. In the upper plot, it is possible to route the directional mmWave signal from the source $x$ to the destination $y$ with relaying. But in the  bottom plot such routing is not possible due to the blockage.}
\caption{Two sample realizations of the Boolean blockage model. The source $x$ and destination $y$ are connected in (a) but disconnected in (b).}
\label{fig:1}
\end{figure}

%As in \cite{baek2007spatial}, 
We consider an \textit{extreme} fluid model in which the density of potential relays in $R_\kappa  (x, y)$ tends to infinity. Investigating this asymptotic scenario will reveal the fundamental limit on the connectivity. The route selection can be arbitrary. Further, we model the obstacles by a Boolean model \cite{baccelli2009stochastic}. In this model, the centroid points of the obstacles are randomly distributed according to a homogeneous PPP $\Phi_o$ with intensity $\lambda_o$. The shapes of the obstacles are modeled by a sequence of i.i.d. rectangles $\{\Xi (W_{x_i}, L_{x_i})\}_{x_i \in \Phi_0}$ with randomly distributed widths $\{W_{x_i}\}$ and lengths $\{L_{x_i}\}$. We use $W$ and $L$ to denote the generic random variables for  $\{W_{x_i}\}$ and $\{L_{x_i}\}$, respectively, and assume that both $W$ and $L$ have finite supports, i.e., $W \in [w_{\min}, w_{\max})$ and $L \in [\ell_{\min}, \ell_{\max})$, but $L$ does not have to be larger than $W$. The finite support assumption is reasonable because in practice the obstacles are always of finite size.

Each rectangle $\Xi (W_{x_i}, L_{x_i})$ is positioned perpendicular to the LOS path $x\to y$ between the source $x$ and the destination $y$. Generalizing the model to allow each rectangle to have a random direction is possible but at the cost of notational complexity.  More formally, the Boolean obstacle model is driven by the following independently marked PPP:
\begin{align}
\tilde{\Phi}_o = \sum_{x_i \in \Phi_o} \delta_{\left(x_i, \Xi(W_{x_i}, L_{x_i}) \right) } ,
\end{align} 
where $\delta_{(\cdot)}$ denotes the Dirac measure, the centroid points $\{x_i\}$ are PPP distributed  \textit{germs}, and the marks $\{\Xi(W_{x_i}, L_{x_i})\}$ are the corresponding  i.i.d. rectangular \textit{grains}. 
%For technical reasons, we assume that the expected area of the rectangular grain is finite, i.e., $\mathbb{E} [ W \cdot L] <\infty$. 
The associated Boolean obstacle model is given by
\begin{align}
\Xi_{\textrm{obstacles}} =\bigcup_{x_i \in \Phi_o} \left(x_i + \Xi(W_{x_i}, L_{x_i}) \right) ,
\end{align} 
which is a random geometric object.

\section{Connectivity Analysis}

We are interested in finding the probability that the source $x$ and destination $y$ are connected, i.e., there exists at least one path that may be used to route the directional mmWave signals from the source $x$ to the destination $y$ with the help of the potential relays located in $R_\kappa  (x, y)$ if needed. This problem is demonstrated in Fig. \ref{fig:1}. We denote by $\mathcal{E}$ the connectivity event.
%Equivalently, we may think of this problem as follows. Imagine the rectangular strip $R_\kappa  (x, y)$ as a container. The container has randomly distributed obstacles; these obstacles may or may not be fully contained in the container. Now  we inject water into the container at position $x$. The water can only flow in the container and cannot penetrate the obstacles if any. Then what is the probability that the water percolates the container to wet the point $y$? 
The exact analysis of the connection probability turns out quite challenging. As such, we derive an upper bound given in the following theorem.

\begin{thm}
Assume w.l.o.g. $x=(0,0)$ and $y=(d,0)$. The probability of the connectivity event that there exists a path in $R_\kappa  (x, y)$ that may be used to route the directional mmWave signals from the source $x$ to the destination $y$ is upper bounded as
\begin{align}
\mathbb{P} (\mathcal{E}) 
&\leq \mathbb{P}^{(ub)} (\mathcal{E})  =  \exp \left( -\lambda_o \int \int A_{\kappa, d} (w,\ell) \dint F_W (w) \dint F_L (\ell)    \right),
\label{eq:1}
\end{align}
where 
\[  A_{\kappa, d} (w,\ell)  = \left\{ \begin{array}{ll}
         ( w +d )\ell & \mbox{if $w \geq d$};\\
         2 w  \ell & \mbox{if $w < d$ and $\ell < \kappa$};\\
         (\ell - \kappa) (d - w) + 2 w \ell & \mbox{if $w < d$
         and $\ell \geq \kappa$}.\end{array} \right. \] 
\label{thm:1}
\end{thm}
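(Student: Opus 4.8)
The plan is to avoid the intractable exact event (which couples arbitrarily many obstacles through the connected clusters they form) and instead bound $\mathbb{P}(\mathcal{E})$ from above by the probability of a much simpler \emph{necessary} condition for connectivity, one that involves each obstacle in isolation. In the fluid limit the relays fill $R_\kappa(x,y)$ densely, so connectivity is equivalent to the existence of a continuous obstacle-free curve from $x$ to $y$ inside the strip; this reformulation is what I would exploit.

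First I would invoke monotonicity: adding obstacles can only destroy paths, so if even a single obstacle, taken by itself, already separates $x$ from $y$ within the strip, then the full configuration is surely disconnected. Hence connectivity \emph{implies} that no individual obstacle is ``blocking,'' and $\mathbb{P}(\mathcal{E})$ is at most the probability that no blocking obstacle is present. I would declare an obstacle blocking in exactly three situations: (i) it covers the source $x$, (ii) it covers the destination $y$, or (iii) it spans the full width $\kappa$ of the strip at some abscissa strictly between $x$ and $y$, thereby intercepting every curve, which must cross that vertical line. The geometric crux — to be justified with care at the strip boundaries — is that these are the \emph{only} ways one rectangle can disconnect $x$ from $y$: if its perpendicular length satisfies $\ell<\kappa$ it leaves a free horizontal channel along the top or bottom of the strip, and if it also misses both endpoints a path can always be routed through that channel.

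Next I would convert (i)--(iii) into admissible germ regions for a grain of width $w$ along the segment $x\to y$ and length $\ell$ perpendicular to it. Covering an endpoint confines the germ to a $w\times\ell$ box, contributing area $w\ell$ for each of $x$ and $y$; forming a full-width wall strictly inside confines the germ ordinate to an interval of length $\ell-\kappa$ (nonempty only for $\ell\ge\kappa$) and the germ abscissa to an interval of length $d-w$ (nonempty only for $w<d$), contributing $(\ell-\kappa)(d-w)$. Then $A_{\kappa,d}(w,\ell)$ is the measure of the \emph{union} of these regions, and the three cases in the statement emerge by resolving the overlaps: when $w\ge d$ the two endpoint boxes merge into a single box of abscissa-length $w+d$ (and the wall region is contained in it), giving $(w+d)\ell$; when $w<d$ the endpoint boxes are disjoint, meet the wall region only on a null set, and one simply adds $2w\ell$ to the wall term $(\ell-\kappa)(d-w)$, the latter vanishing when $\ell<\kappa$.

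Finally I would use the theory of independently marked Poisson processes. The blocking obstacles are obtained from $\tilde{\Phi}_o$ by retaining each germ--grain pair whose germ lies in the grain-dependent region above, which is again Poisson, with total intensity $\lambda_o \int\int A_{\kappa,d}(w,\ell)\,\dint F_W(w)\,\dint F_L(\ell)$; the void probability then yields the stated exponential. I expect the principal difficulty to lie in the geometric claim of the second paragraph — rigorously showing that $\ell<\kappa$ together with non-coverage of the endpoints always admits a routable path, including paths that must emanate from $x$ and $y$ on the strip boundary — together with the overlap bookkeeping of the third paragraph that produces the clean split into the three regimes.
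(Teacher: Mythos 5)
Your proposal is correct and rests on exactly the same two pillars as the paper's proof: (a) relaxing the intractable joint-blockage event to the \emph{necessary} condition that no single obstacle by itself disconnects $x$ from $y$ (which is precisely why the result is only an upper bound), and (b) the identification of the blocking germ regions --- the two $w\times\ell$ endpoint boxes plus the full-width ``wall'' region --- whose union area is $A_{\kappa,d}(w,\ell)$, with the same case split according to whether the endpoint boxes overlap ($w\ge d$) and whether the wall region is nonempty ($\ell\ge\kappa$). Your overlap bookkeeping checks out: when $w\ge d$ the wall region is indeed absorbed into the merged endpoint box of area $(w+d)\ell$, and when $w<d$ the residual wall strip has abscissa length $d-w$. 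Where you genuinely diverge is in the final probabilistic step: the paper discretizes the mark space $[w_{\min},w_{\max})\times[\ell_{\min},\ell_{\max})$, treats each cell as an independently thinned PPP with a constant grain, multiplies per-cell void probabilities, and passes to the limit of vanishing mesh; you instead apply the marking theorem once, observing that the blocking germ--grain pairs form a measurable subset of the Poisson process $\tilde{\Phi}_o$ on $\mathbb{R}^2\times$ mark space with mean measure $\lambda_o\int\int A_{\kappa,d}(w,\ell)\dint F_W(w)\dint F_L(\ell)$, so the void probability gives the exponential directly. Your route is shorter and avoids the approximation-and-limit argument entirely (no need to justify that replacing all grains in a cell by a common rectangle becomes exact as $\delta\to 0$); the paper's route is more elementary in that it uses only independent thinning of an unmarked PPP. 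The one point you rightly flag as needing care --- that $\ell<\kappa$ together with non-coverage of both endpoints always leaves a routable detour through the free channel above or below the rectangle --- is asserted without proof in the paper as well, so you are not missing anything the authors supplied.
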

\begin{proof}
We defer the proof to Section \ref{proof:thm:1}.
\end{proof}

In the Boolean obstacle model $\Xi_{\textrm{obstacles}}$, different obstacles may intersect and jointly form a ``giant'' obstacle that blocks any path in $R_\kappa  (x, y)$ connecting the source $x$ to the destination $y$. This delicate correlation among obstacles in the blockage makes the analysis of the exact connection probability intractable. Instead, the upper bound $\mathbb{P}^{(ub)} (\mathcal{E})$ in (\ref{eq:1}) is derived by only counting blockage events caused by a single obstacle. 
Neglecting the correlation leads to an overestimated connection probability. 
Clearly, $\mathbb{P}^{(ub)} (\mathcal{E})$ is tight only if the correlation is small -- which holds if the obstacles are sparsely distributed, and so most blockages are indeed caused by a single obstacle.

We note that as $\kappa \to 0$, the upper bound becomes exact, i.e.,
\begin{align}
\lim_{\kappa \to 0} \mathbb{P} (\mathcal{E})  = \lim_{\kappa \to 0} \mathbb{P}^{(ub)} (\mathcal{E}).
\end{align}
This is because ``giant'' obstacles do not lead to more blockage as $\kappa \to 0$. Note that $\kappa \to 0$ also implies that we do not rely on relaying but simply use the direct LOS path from the source $x$ to the destination $y$. In other words, as $\kappa \to 0$, the connection probability equals the probability of the existence of the LOS path from the source to the destination. We formalize this observation in the following Corollary \ref{cor:1}, which shows that the LOS probability in the Boolean obstacle model exponentially decreases with the source-destination distance, the density of obstacles, and the mean length and width of the rectangular obstacles.
\begin{cor}
The direct LOS path from the source $x=(0,0)$ to the destination $y=(d,0)$ exists with probability
\begin{align}
p_{LOS} \triangleq \lim_{\kappa \to 0} \mathbb{P} (\mathcal{E})   = \exp \left( - \lambda_o \mathbb{E}[L] ( d + \mathbb{E}[W] )    \right) .
\end{align}
\label{eq:3}
\label{cor:1}
\end{cor}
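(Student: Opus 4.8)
The plan is to obtain Corollary~\ref{cor:1} as the $\kappa\to 0$ limit of the upper bound in Theorem~\ref{thm:1}. The excerpt has already argued that this limit is exact, i.e. $\lim_{\kappa\to 0}\mathbb{P}(\mathcal{E})=\lim_{\kappa\to 0}\mathbb{P}^{(ub)}(\mathcal{E})$, because with $\kappa=0$ no relaying is used and only a single obstacle straddling the segment $x\to y$ can block it, so the single-obstacle counting underlying $\mathbb{P}^{(ub)}$ introduces no error. Granting this, and using that $t\mapsto\exp(-\lambda_o t)$ is continuous, the whole task reduces to evaluating $\lim_{\kappa\to 0}\int\int A_{\kappa,d}(w,\ell)\,\dint F_W(w)\,\dint F_L(\ell)$.

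First I would take the pointwise limit of the integrand. On $\{w\geq d\}$ the value $(w+d)\ell$ does not depend on $\kappa$. On $\{w<d\}$ I use that $L$ is supported in $[\ell_{\min},\ell_{\max})$ with $\ell_{\min}>0$: as soon as $\kappa\leq\ell_{\min}$ the middle case ($\ell<\kappa$) is empty, so every $(w,\ell)$ falls in the third branch, where $A_{\kappa,d}(w,\ell)=(\ell-\kappa)(d-w)+2w\ell\to \ell(d-w)+2w\ell=(w+d)\ell$ as $\kappa\to 0$. Hence $A_{\kappa,d}(w,\ell)\to(w+d)\ell$ pointwise on the entire support.

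Next I would interchange the limit and the integral, which is legitimate by bounded convergence: on the compact support and for all small $\kappa$ one has $0\leq A_{\kappa,d}(w,\ell)\leq (w_{\max}+d)\ell_{\max}$, a constant that is trivially $F_W\otimes F_L$-integrable. Therefore the integral converges to $\int\int (w+d)\ell\,\dint F_W(w)\,\dint F_L(\ell)$, which, by independence of $W$ and $L$, factorizes as $\mathbb{E}[L]\,(d+\mathbb{E}[W])$. Substituting back yields $p_{LOS}=\exp\!\big(-\lambda_o\,\mathbb{E}[L](d+\mathbb{E}[W])\big)$, the claimed expression.

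The main obstacle is not the elementary limit but the exactness input $\lim_{\kappa\to 0}\mathbb{P}(\mathcal{E})=\lim_{\kappa\to 0}\mathbb{P}^{(ub)}(\mathcal{E})$, which is argued only informally before the corollary. To make the proof self-contained I would verify the formula directly from the Boolean model: the direct LOS path survives if and only if no grain of $\tilde{\Phi}_o$ meets the segment $K=[0,d]\times\{0\}$, and by the void (avoidance) functional of a Boolean model this probability equals $\exp\!\big(-\lambda_o\,\mathbb{E}\,|K\oplus\check{\Xi}(W,L)|\big)$. Since each grain is a centered rectangle, $\check{\Xi}=\Xi$, and dilating the segment by a $W\times L$ rectangle (with $W$ along $x\to y$ and $L$ perpendicular) gives a rectangle of side lengths $d+W$ and $L$; hence $\mathbb{E}\,|K\oplus\check{\Xi}|=\mathbb{E}[(d+W)L]=(d+\mathbb{E}[W])\,\mathbb{E}[L]$ by independence. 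This reproduces the formula exactly, simultaneously confirming the limit computation and the exactness claim.
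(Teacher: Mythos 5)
Your proposal is correct, and its first half is essentially the route the paper takes implicitly: Corollary \ref{cor:1} is obtained by letting $\kappa\to 0$ in Theorem \ref{thm:1}, observing that all three branches of $A_{\kappa,d}(w,\ell)$ collapse to $(w+d)\ell$ (the third branch via $(\ell-\kappa)(d-w)+2w\ell\to(w+d)\ell$), passing the limit through the integral, and factorizing $\mathbb{E}[(W+d)L]=(d+\mathbb{E}[W])\mathbb{E}[L]$ using the independence of $W$ and $L$ that is already built into the product measure $\dint F_W\dint F_L$ in Theorem \ref{thm:1}. The paper never writes this computation out, and in particular it justifies the exactness $\lim_{\kappa\to 0}\mathbb{P}(\mathcal{E})=\lim_{\kappa\to 0}\mathbb{P}^{(ub)}(\mathcal{E})$ only by the informal remark that ``giant'' obstacles cause no additional blockage of a single segment. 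Your second argument genuinely strengthens this: computing the void probability of the Boolean model on the segment $K=[0,d]\times\{0\}$ directly, via $\exp(-\lambda_o\,\mathbb{E}\,|K\oplus\check{\Xi}(W,L)|)$ with $|K\oplus\check{\Xi}|=(d+W)L$, gives the exact LOS probability without any reference to Theorem \ref{thm:1} or to the single-obstacle approximation, and thereby turns the paper's heuristic exactness claim into a proof. The only minor quibble is that you invoke $\ell_{\min}>0$ to empty the middle branch for small $\kappa$; this is not needed, since for each fixed $\ell>0$ one eventually has $\kappa<\ell$, so pointwise convergence (and hence bounded convergence) holds regardless.
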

\vspace{-1em}

To gain further insight, let us assume that the rectangular obstacles do not have random widths and lengths. Then Theorem \ref{thm:1} reduces to the following Corollary \ref{cor:2}.  
\begin{cor}
Assume that $W\equiv w$ and $L \equiv \ell$. The probability that the source $x=(0,0)$ and the destination $y=(d,0)$ are connected can be upper bounded by
\[  \mathbb{P}^{(ub)} (\mathcal{E})  = \left\{ \begin{array}{ll}
         \exp ( -\lambda_o ( w +d )\ell  ) & \mbox{if $w \geq d$};\\
         \exp (-\lambda_o \cdot 2 w  \ell ) & \mbox{if $w < d$ and $\ell < \kappa$};\\
          \exp (-\lambda_o ( (\ell - \kappa) (d - w) + 2 w \ell ) ) & \mbox{if $w < d$
         and $\ell \geq \kappa$}.\end{array} \right. \] 
\label{cor:2}
\end{cor}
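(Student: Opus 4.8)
The plan is to obtain the statement as an immediate specialization of Theorem~\ref{thm:1}, since assuming deterministic widths and lengths merely replaces the distributions $F_W$ and $F_L$ by degenerate point masses. Concretely, when $W \equiv w$ and $L \equiv \ell$ almost surely, the laws $\dint F_W$ and $\dint F_L$ become Dirac measures concentrated at the single points $w$ and $\ell$, respectively.

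First I would substitute these degenerate distributions directly into the exponent of the upper bound $\mathbb{P}^{(ub)} (\mathcal{E})$ stated in (\ref{eq:1}). The sifting property of the Dirac measure collapses the double integral to a single point evaluation, so that
\[
\int \int A_{\kappa, d} (w', \ell')\, \dint F_W (w')\, \dint F_L (\ell') = A_{\kappa, d} (w, \ell).
\]
Hence the bound reduces to $\mathbb{P}^{(ub)} (\mathcal{E}) = \exp\left( -\lambda_o A_{\kappa, d} (w, \ell) \right)$, with $w$ and $\ell$ now the fixed deterministic values.

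Second, I would simply read off the three branches of the piecewise definition of $A_{\kappa, d}$ given in Theorem~\ref{thm:1} and insert each into the exponent. This produces exactly the three claimed cases: the exponent is $(w+d)\ell$ when $w \geq d$, it is $2w\ell$ when $w < d$ and $\ell < \kappa$, and it is $(\ell - \kappa)(d - w) + 2w\ell$ when $w < d$ and $\ell \geq \kappa$, which matches the statement of the corollary term by term.

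There is essentially no hard step here, as the result is a direct corollary of Theorem~\ref{thm:1}; the only point worth checking is that the point-mass integration is legitimate, which holds because $A_{\kappa, d}$ is bounded on the finite supports $[w_{\min}, w_{\max})$ and $[\ell_{\min}, \ell_{\max})$ assumed in the model. I would therefore present the argument in a couple of lines rather than as an extended proof.
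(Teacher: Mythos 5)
Your proposal is correct and matches the paper's own (implicit) derivation: the paper presents Corollary \ref{cor:2} as a direct specialization of Theorem \ref{thm:1}, exactly as you do by collapsing $\dint F_W$ and $\dint F_L$ to point masses and reading off the three branches of $A_{\kappa,d}(w,\ell)$. Nothing further is needed.
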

\vspace{-1em}
Two remarks on Corollary \ref{cor:2} are in order.

\textbf{Remark 1.} When the width of a typical obstacle is greater than or equal to the source-destination distance, i.e. $w \geq d$, the upper bound on the connection probability equals $e^{-\lambda_o ( w +d )\ell  }$, which is independent of relaying route window $\kappa$. This is because when $w \geq d$ the blockage typically happens when either the source or the destination is located indoor. In this case, relaying is not very helpful in increasing the source-destination connectivity.

\textbf{Remark 2.}  When the width of a typical obstacle is smaller than the source-destination distance, i.e. $w < d$, relaying becomes more helpful in increasing the source-destination connectivity. In particular, with a fixed rectangular length $\ell$,  $\mathbb{P}^{(ub)} (\mathcal{E})$ decreases when $\kappa \in [0,\ell)$ and then stays constant when $\kappa \in [\ell,\infty)$. This observation gives a useful design rule: \textit{Set the relaying route window $\kappa$ equal to (in an order of magnitude sense) the obstacle length $\ell$.} Choosing a larger $\kappa$ beyond the obstacle length $\ell$ may further increase the connection probability. But the gain is expected to be marginal because the connection probability is always upper bounded by $e^{-\lambda_o \cdot 2 w  \ell }$ when $\kappa \geq \ell$.

The last remark gives a rule of thumb on selecting the relaying route window $\kappa$ when the rectangular obstacles have a common fixed width and length. The following Corollary \ref{cor:3} generalizes the result to the more general case where the rectangular obstacles may have random sizes. 
\begin{cor}
The relaying route window $\kappa^\star = \ell_{\max}$ maximizes the upper bound  $\mathbb{P}^{(ub)} (\mathcal{E})$ of connection probability, and the attained $\mathbb{P}^{(ub)} (\mathcal{E})$ is given by
\begin{align}
\mathbb{P}^{(ub)} (\mathcal{E})  =  
& \exp \left( -\lambda_o \int \ell \left( \int_{w\geq d} (w+d)   \dint F_W (w) +  \int_{w < d} 2w \dint F_W (w) \right)   \dint  F_L (\ell)   \right) .
\end{align}
\label{cor:3}
\end{cor}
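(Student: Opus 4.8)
The plan is to exploit the special structure of $\mathbb{P}^{(ub)}(\mathcal{E})$ in Theorem~\ref{thm:1}, namely that it equals $\exp\bigl(-\lambda_o I(\kappa)\bigr)$ with exponent integral $I(\kappa) \triangleq \int\int A_{\kappa,d}(w,\ell)\,\dint F_W(w)\,\dint F_L(\ell)$. Since $\lambda_o>0$ and the map $t\mapsto\exp(-\lambda_o t)$ is strictly decreasing, maximizing $\mathbb{P}^{(ub)}(\mathcal{E})$ over $\kappa$ is the same as minimizing $I(\kappa)$ over $\kappa\ge 0$. I would first make this reduction explicit and then argue that $I(\kappa)$ can be minimized pointwise inside the integral. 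This is legitimate because $A_{\kappa,d}\ge 0$ and, by the finite-support assumption $W\in[w_{\min},w_{\max})$, $L\in[\ell_{\min},\ell_{\max})$, every integrand is bounded and every integral is finite.

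Next I would examine the dependence of $A_{\kappa,d}(w,\ell)$ on $\kappa$ for each fixed pair $(w,\ell)$, using the three-case formula from Theorem~\ref{thm:1}. On the region $w\ge d$ the integrand equals $(w+d)\ell$ and is independent of $\kappa$, so it contributes a fixed constant to $I(\kappa)$. On the region $w<d$ the integrand is $(\ell-\kappa)(d-w)+2w\ell$ when $\kappa\le\ell$ and $2w\ell$ when $\kappa\ge\ell$. Because $d-w>0$, the first branch is strictly decreasing in $\kappa$ with slope $-(d-w)$, the two branches match continuously at $\kappa=\ell$, and the second branch is flat. Hence for each $(w,\ell)$ with $w<d$ the integrand $A_{\kappa,d}(w,\ell)$ is nonincreasing in $\kappa$ and attains its pointwise floor $2w\ell$ for every $\kappa\ge\ell$.

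The crucial observation is that the finite support of $L$ lets a single scalar window drive all of these integrands to their floors at once: choosing $\kappa\ge\ell_{\max}$ forces $\kappa>\ell$ for every $\ell$ in the support $[\ell_{\min},\ell_{\max})$, so $A_{\kappa,d}(w,\ell)=2w\ell$ throughout the region $w<d$. As each integrand is then individually minimized, $I(\kappa)$ attains its minimum, and therefore $\mathbb{P}^{(ub)}(\mathcal{E})$ attains its maximum. The smallest such window is $\kappa^\star=\ell_{\max}$: for any $\kappa<\ell_{\max}$ there is positive mass on $\ell\in[\kappa,\ell_{\max})$ where the integrand strictly exceeds $2w\ell$, a strict loss. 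This recovers, in the general random-size setting, the design rule of Remark~2.

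Finally I would substitute $\kappa=\kappa^\star=\ell_{\max}$ into $I(\kappa)$, split the inner $w$-integral at $w=d$, and use $A_{\ell_{\max},d}(w,\ell)=(w+d)\ell$ on $\{w\ge d\}$ and $=2w\ell$ on $\{w<d\}$; the common factor $\ell$ pulls out against $\dint F_L(\ell)$, producing exactly the stated closed form. I expect the main obstacle to be conceptual rather than computational, namely justifying that one scalar $\kappa$ can simultaneously minimize the entire family of pointwise integrands and that the boundary value $\ell_{\max}$ is the smallest window achieving this — both of which rest squarely on the finiteness of the support of $L$ and on the monotonicity established in the second step.
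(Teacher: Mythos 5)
Your proposal is correct and is essentially the argument the paper intends: the paper omits the proof of Corollary~\ref{cor:3}, but Remark~2 already records the pointwise monotonicity of $A_{\kappa,d}$ in $\kappa$ for fixed obstacle dimensions, and your proof is exactly the extension of that observation to random $W,L$ by minimizing the exponent integrand pointwise and invoking the finite support of $L$. The only caveat, immaterial to the corollary as stated, is that your added claim that $\ell_{\max}$ is the \emph{smallest} maximizing window additionally requires $F_W$ to place positive mass on $\{w<d\}$ (otherwise $I(\kappa)$ is constant in $\kappa$).
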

\vspace{-2em}
%\begin{proof}
%From Theorem \ref{thm:1}, we have
%\begin{align}
%&\mathbb{P}^{(ub)} (\mathcal{E})  =  \textrm{const} \cdot  e^{-\lambda_o \int_{w<d} \int_{\ell \geq \kappa} (\ell - \kappa) (d-w) \dint F_L(\ell) \dint F_W (w) } \notag \\
%&=   \textrm{const} \cdot  e^{-\lambda_o \int_{w<d} (d-w) \dint F_W (w) \cdot  \int_{\ell \geq \kappa} (\ell-\kappa) \dint F_L(\ell)  },
%\label{eq:2}
%\end{align}
%where const is positive and is independent of relaying route window $\kappa$. Note that $\lambda_o \int_{w<d} (d-w) \dint F_W (w) \geq 0$. Thus, maximizing $\mathbb{P}^{(ub)} (\mathcal{E})$ is equivalent to minimizing $g(\kappa) \triangleq \int_{\ell \geq \kappa} (\ell - \kappa) \dint F_L (\ell)$. The derivative of $g(\kappa)$ with respect to $\kappa$  is  
%\begin{align}
%\frac{\dint g(\kappa)}{\dint \kappa} = - \int_{\ell \geq \kappa} \dint F_L (\ell) \leq 0 .
%\end{align}
%It follows that $g(\kappa)$ decreases with $\kappa$ and therefore $\kappa^\star = \ell_{\max}$. Plugging $\kappa^\star = \ell_{\max}$ into (\ref{eq:1}) yields the expression (\ref{eq:2}) for the corresponding upper bound.
%\end{proof}
The proof of Corollary \ref{cor:3} is omitted due to space limits.
Corollary \ref{cor:3} validates our intuition: A large relaying route window $\kappa$ is beneficial for connectivity. In particular, we should set $\kappa$ equal to the maximum obstacle length $\ell_{\max}$. As in the case with constant rectangular length, choosing a larger $\kappa$ beyond the maximum obstacle length $\ell_{\max}$ may further increase the actual connection probability. But the gain is expected to be small. This implies that in practice it suffices to perform neighbor discovery in routing in a range slightly larger than $\ell_{\max}$.

The connection probability characterized in Theorem \ref{thm:1} does not make any assumption on the locations of the source and destination relative to the Boolean obstacle model $\Xi_{\textrm{obstacles}}$. Note that when $x \in \Xi_{\textrm{obstacles}}$, the physical meaning is that $x$ is an indoor transmitter; otherwise, $x$ is an outdoor transmitter. So Theorem \ref{thm:1} characterizes the connectivity of a typical communication pair in which the source and destination can be either indoor or outdoor. For a typical cellular pair consisting of a base station (BS) and a mobile, the BS is usually located outdoor while the mobile can be either indoor or outdoor. Then conditioned on the source (resp. destination) being outdoor, what is the connection probability of the source-destination pair? Further, what is the connection probability if both the source and destination are outdoor? We answer these questions in Theorem \ref{pro:1}.

\begin{thm}
Conditioned on that $x \notin \Xi_{\textrm{obstacles}}$, i.e., $x$ is outdoor, the probability that the source $x=(0,0)$ and the destination $y=(d,0)$ are connected can be upper bounded by
\begin{align}
 \mathbb{P}^{(ub)} ( \mathcal{E}  | x \textrm{ is outdoor} )  &=  \exp \left( \lambda_o \mathbb{E}[W] \mathbb{E}[L] \right)  \cdot  \exp \left( -\lambda_o \int \int A_{\kappa, d} (w,\ell) \dint F_W (w) \dint F_L (\ell)    \right) .
\end{align}
Conditioned on that $\{x,y\} \notin \Xi_{\textrm{obstacles}}$, i.e., both $x$ and $y$ are outdoor, the probability that the source $x=(0,0)$ and the destination $y=(d,0)$ are connected can be upper bounded by
\begin{align}
&\mathbb{P}^{(ub)} ( \mathcal{E} | x \textrm{ and } y \textrm{ are outdoor} )   =  \exp \left( 2\lambda_o \mathbb{E}[W] \mathbb{E}[L] \right) \times
\notag \\  
&\exp \left( -\lambda_o \int \int ( A_{\kappa, d} (w,\ell) + (w-d)\ell \cdot \mathbb{I} (w \geq d) ) \dint F_W (w) \dint F_L (\ell)    \right).
\end{align}
\label{pro:1}
\end{thm}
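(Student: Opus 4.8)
The plan is to exploit the fact that the upper bound $\mathbb{P}^{(ub)}(\mathcal{E})$ of Theorem \ref{thm:1} is itself the void probability of the marked PPP $\tilde{\Phi}_o$ over a single-obstacle blocking set, and that the ``outdoor'' conditioning events are void probabilities over \emph{subregions} of that same set. First I would recall (from the single-obstacle counting that produces $A_{\kappa,d}$) that $\mathbb{P}^{(ub)}(\mathcal{E})$ equals the probability that no germ $x_i$ of $\tilde{\Phi}_o$ with mark $\Xi(W_{x_i},L_{x_i})$ lands in a blocking region $B(w,\ell)$ of area $A_{\kappa,d}(w,\ell)$. This region always contains the two endpoint-covering rectangles $B_x(w,\ell)=[-w/2,w/2]\times[-\ell/2,\ell/2]$ and $B_y(w,\ell)=[d-w/2,d+w/2]\times[-\ell/2,\ell/2]$: in the regime $w<d,\ \ell<\kappa$ one has $A_{\kappa,d}=2w\ell=|B_x|+|B_y|$ with the two disjoint; in the regime $w\geq d$ one has $A_{\kappa,d}=(w+d)\ell=|B_x\cup B_y|$ with overlap; and when $\ell\geq\kappa$ an additional interior strip-spanning rectangle only enlarges $B$. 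The structural point is therefore that $B_x\cup B_y\subseteq B$ in every regime, since an obstacle covers the endpoint $x$ precisely when its germ lies in $B_x$ (the grain being a centered, hence symmetric, rectangle).

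Because $B_x\subseteq B$, the event $\mathcal{E}'$ underlying the upper bound (no single-obstacle blockage, i.e.\ no germ in $B$) is contained in $\{x\notin\Xi_{\textrm{obstacles}}\}$, and similarly $B_x\cup B_y\subseteq B$ gives $\mathcal{E}'\subseteq\{x,y\notin\Xi_{\textrm{obstacles}}\}$. The conditional bounds then follow from an elementary identity: since $\mathcal{E}'\cap\{x\textrm{ outdoor}\}=\mathcal{E}'$,
\[
\mathbb{P}(\mathcal{E}\mid x\textrm{ outdoor})\leq\mathbb{P}(\mathcal{E}'\mid x\textrm{ outdoor})=\frac{\mathbb{P}(\mathcal{E}')}{\mathbb{P}(x\textrm{ outdoor})}=\frac{\mathbb{P}^{(ub)}(\mathcal{E})}{\mathbb{P}(x\textrm{ outdoor})}.
\]
It then remains only to evaluate the denominator. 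The standard Boolean-model void formula, together with independence of $W$ and $L$, gives $\mathbb{P}(x\textrm{ outdoor})=\exp(-\lambda_o\mathbb{E}[|\Xi|])=\exp(-\lambda_o\mathbb{E}[W]\mathbb{E}[L])$, so the reciprocal is exactly the stated prefactor $\exp(\lambda_o\mathbb{E}[W]\mathbb{E}[L])$, yielding the first claim.

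For the two-point case I would compute $\mathbb{P}(x,y\textrm{ outdoor})$ as the void probability of $B_x\cup B_y$. By inclusion--exclusion the relevant area is $2w\ell$ minus the overlap, and the rectangles overlap in their $x$-coordinate precisely when $w\geq d$, contributing $(w-d)\ell\,\mathbb{I}(w\geq d)$; hence $\mathbb{P}(x,y\textrm{ outdoor})=\exp(-\lambda_o\int\int[2w\ell-(w-d)\ell\,\mathbb{I}(w\geq d)]\dint F_W(w)\dint F_L(\ell))$. Dividing $\mathbb{P}^{(ub)}(\mathcal{E})$ by this and using $\int\int 2w\ell\,\dint F_W\dint F_L=2\mathbb{E}[W]\mathbb{E}[L]$ produces the prefactor $\exp(2\lambda_o\mathbb{E}[W]\mathbb{E}[L])$ and absorbs the term $(w-d)\ell\,\mathbb{I}(w\geq d)$ into the integrand alongside $A_{\kappa,d}$, which is exactly the second stated expression.

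The main obstacle is the set-containment step: one must verify from the construction behind Theorem \ref{thm:1} that the endpoint-covering rectangles $B_x,B_y$ are genuinely subsets of the single-obstacle blocking region $B$ across all three regimes of $A_{\kappa,d}$, since this is what makes the conditioning events implied by $\mathcal{E}'$ and collapses the conditional probability to the simple quotient $\mathbb{P}^{(ub)}(\mathcal{E})/\mathbb{P}(\cdot)$. Once that containment is established, the only remaining work is the routine geometric bookkeeping of the overlap of $B_x$ and $B_y$ for the two-point denominator, which is immediate from the two cases $w\geq d$ and $w<d$.
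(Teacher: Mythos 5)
Your proof is correct. The paper actually omits its own argument for this theorem, saying only that it is ``similar to that of Theorem \ref{thm:1}'' -- i.e., presumably a rerun of the four-case geometric analysis under the conditioning -- so there is no line-by-line proof to match you against; your argument is a valid and arguably cleaner realization of the same idea. The key observations you make are exactly the right ones: $\mathbb{P}^{(ub)}(\mathcal{E})$ is the void probability of the marked PPP over the single-obstacle blocking region $B(w,\ell)$ of area $A_{\kappa,d}(w,\ell)$; the endpoint-covering rectangles $B_x,B_y$ (germ positions whose centered grain covers $x$, resp.\ $y$) satisfy $B_x\cup B_y\subseteq B$ in every regime, so the no-single-obstacle-blockage event $\mathcal{E}'$ already implies the conditioning events and the conditional bound collapses to the quotient $\mathbb{P}(\mathcal{E}')/\mathbb{P}(\cdot)$; and the denominators are standard Boolean-model void probabilities, $\exp(-\lambda_o\mathbb{E}[W]\mathbb{E}[L])$ for one point and $\exp\left(-\lambda_o\int\int\left[2w\ell-(w-d)\ell\,\mathbb{I}(w\geq d)\right]\dint F_W\dint F_L\right)$ for two, which produce precisely the stated prefactors and the extra $(w-d)\ell\,\mathbb{I}(w\geq d)$ term. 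Equivalently, one can condition on no germs in $B_x$ (resp.\ $B_x\cup B_y$), note that the restriction of the PPP to the complement is again a PPP, and integrate the residual area $|B\setminus B_x|=A_{\kappa,d}-w\ell$; this is algebraically identical to your quotient. Two cosmetic points only: in the regime $w\geq d$, $\ell\geq\kappa$ the containment holds with equality (there is no additional strip-spanning region there, contrary to what your phrasing might suggest), and the step $\mathbb{E}[WL]=\mathbb{E}[W]\mathbb{E}[L]$ relies on the independence of $W$ and $L$ implicit in the paper's product measure $\dint F_W\dint F_L$, which you do flag. Neither affects correctness.
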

\vspace{-2em}

The proof of Theorem \ref{pro:1} is similar to that of Theorem \ref{thm:1} and thus is omitted for brevity. From Theorems \ref{thm:1} and \ref{pro:1}, we can see that
\begin{align}
\mathbb{P}^{(ub)} ( \mathcal{E}  )  & \leq \mathbb{P}^{(ub)} ( \mathcal{E} | x \textrm{ is outdoor} )  \leq \mathbb{P}^{(ub)} (\mathcal{E} | x \textrm{ and } y \textrm{ are outdoor} ) .
\end{align}
The above relations are intuitive because conditioning on the source, or the destination, or both being outdoors reduces the blockage probability and thus increases the connectivity.

Before ending this section, we give a numerical example in Fig. \ref{fig:3}. It can be seen that multi-hop relaying can greatly improve the connectivity of mmWave networks. Further, setting the relaying route window $\kappa$ to just $20$m provides nearly optimal performance. Making $\kappa$ much larger (i.e., $100$m) provides marginal gains. These observations confirm our insights from the analytical results.

\begin{figure}
\centering
\includegraphics[width=8.5cm]{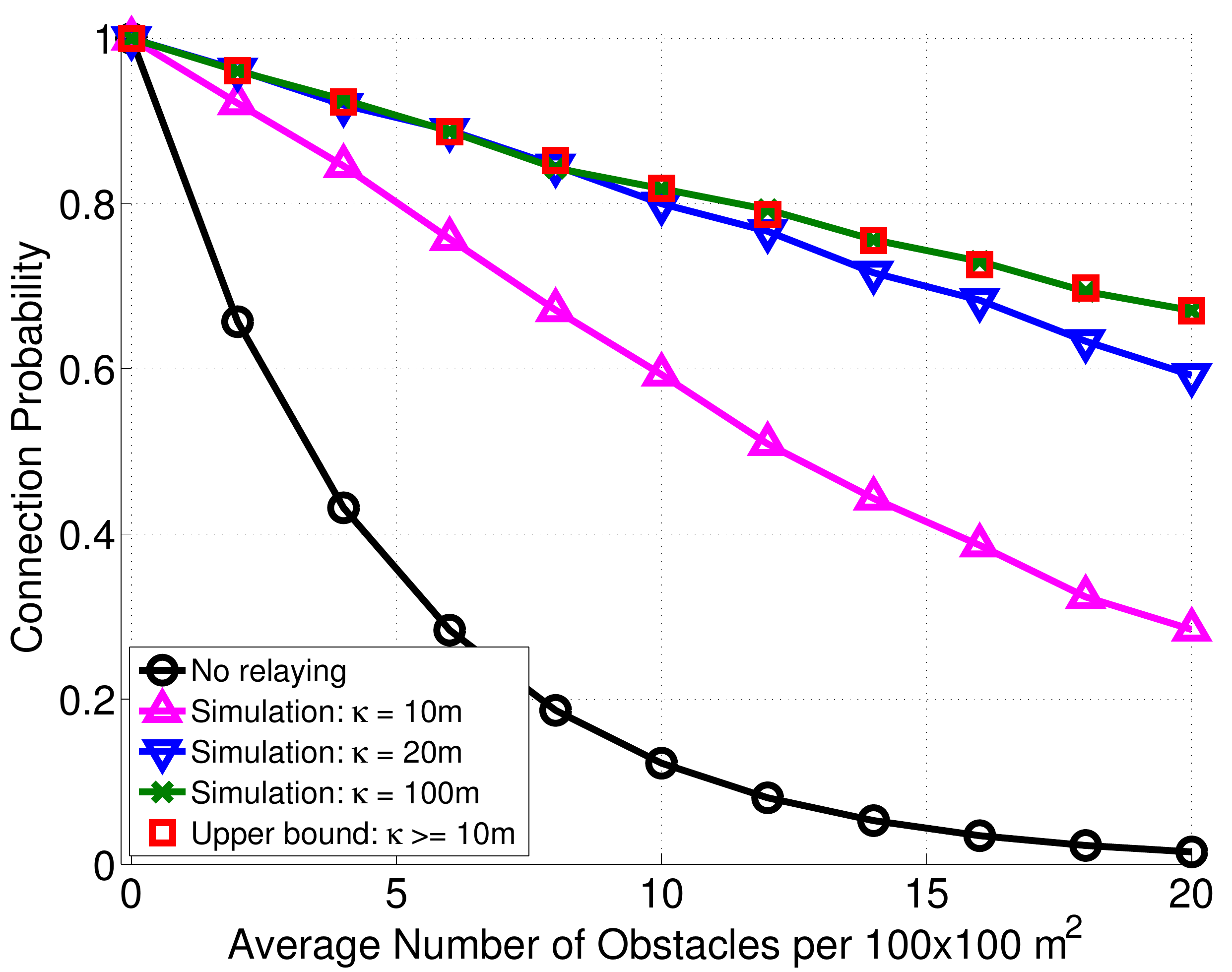}
\caption{Connectivity improvement of multi-hop relaying versus single hop mmWave transmission. Parameters: $d=200$m, $w=\ell=10$m. }
\label{fig:3}
\end{figure}

\section{Proof of Theorem \ref{thm:1}}
\label{proof:thm:1}

In this section, we prove Theorem \ref{thm:1}. To begin with, we partition the width interval $[w_{\min}, w_{\max})$ of the  rectangular obstacles as $w_{\min}=w_0<w_1<...<w_m=w_{\max}$. Let $\Delta w_i = w_i - w_{i-1}, i=1,...,m,$ be the length of the $i$-th subinterval $[w_{i-1}, w_i)$. Similarly, we partition the length interval $[\ell_{\min}, \ell_{\max})$ of the  rectangular obstacles as $\ell_{\min}=\ell_0<\ell_1<...<\ell_n=\ell_{\max}$. Let $\Delta \ell_j = \ell_j - \ell_{j-1}, j=1,...,n,$ be the length of the $j$-th subinterval $[\ell_{j-1}, \ell_j)$. Denote by
\begin{align}
\Phi_o (w_i, \ell_j) = \{ x \in  \Phi_o: W_x \in  [w_{i-1}, w_i),  L_x \in  [\ell_{j-1}, \ell_j) \}. \notag 
\end{align}
In other words, $\Phi_o (w_i, \ell_j)$ collects the centroid points of the obstacles whose grains' widths and lengths are in $[w_{i-1}, w_i) \times [\ell_{j-1}, \ell_j)$. Because points in the PPP $\Phi_o$ are independently marked by $\{\Xi(W_{x_i},L_{x_i})\}$, $\Phi_o (w_i, \ell_j)$ is an independent thinning of the PPP $\Phi_o$. The probability that a given point $x$ in $\Phi_o$ is retained equals $(  F_W (w_i) - F_W (w_{i-1})  ) \cdot ( F_L (\ell_j) - F_L (\ell_{j-1}) )$. It follows that $\Phi_o (w_i, \ell_j)$ is also a PPP with intensity 
\begin{align}
\lambda_o ( w_i, \ell_j ) = &\lambda_o   (  F_W (w_i) - F_W (w_{i-1})  )   ( F_L (\ell_j) - F_L (\ell_{j-1}) ).
\end{align}
Moreover, $\Phi_o (w_i, \ell_j), \forall i, j$, are independent.

Next we approximate all the obstacles whose centroid points are in $\Phi_o (w_i, \ell_j)$ by the same rectangle $\Xi (w_i, \ell_j)$. This approximation becomes exact as $\delta = \max ( \Delta w_i , i=1,...,m, \Delta \ell_j , j=1,...,n ) $ goes to zero (with increasing $m$ and $n$). With this approximation, the intensity of $\Phi_o (w_i, \ell_j)$ can be written as
\begin{align}
\lambda_o ( w_i, \ell_j ) = \lambda_o  f_W (w_i)  \Delta w_i  f_L (\ell_j)   \Delta \ell_j .
\end{align}
Denote by $B_{i,j}$ the event that the obstacles whose centroid points are located in $\Phi_o (w_i, \ell_j)$ cause blockage (i.e., no path connecting  the source $x$ to the destination $y$ can be found). It follows that
\begin{align}
\mathbb{P} (\textrm{Blockage}) \geq \mathbb{P} ( \cup_{i,j} B_{i,j} ) =1 - \prod_{i,j} \left( 1 - \mathbb{P} ( B_{i,j} ) \right).
\end{align}
where the inequality is due to the fact that we ignore that obstacles whose centroid points belong to different $\Phi_o (w_i, \ell_j)$'s may jointly cause blockage, and the last equality is due to the independence of $\Phi_o (w_i, \ell_j)$, $\forall i, j$.

To calculate a lower bound on the blockage probability, we further consider the blockage event $\tilde{B}_{i,j}$ caused by a single obstacle whose centroid is located in $\Phi_o (w_i, \ell_j)$.  Clearly, $\mathbb{P} ( B_{i,j} ) \geq \mathbb{P} ( \tilde{B}_{i,j} ) $ because obstacles whose centroid points belong to the same $\Phi_o (w_i, \ell_j)$ may also jointly cause blockage. To calculate $\mathbb{P} ( \tilde{B}_{i,j} ) $, we consider the following four cases, which are illustrated in Fig. \ref{fig:2}.

\textbf{Case 1:} $\ell_j < \kappa, w_i < d$. As $\ell_j < \kappa$, a single obstacle cannot ``cut off" the rectangular strip $R_\kappa (x,y)$. The blockage happens when either the source $x$ or the destination $y$ located in the obstacle. The last event happens when there exists at least one obstacle whose centroid point is in the region of the two red rectangles shown in the upper-left plot of Fig. \ref{fig:2}. Denoting by $A$ the area of the two red rectangles, 
\begin{align}
&\mathbb{P} ( \tilde{B}_{i,j} |  \ell_j < \kappa, w_i < d) = \mathbb{P} ( \exists x \in \Phi_o (w_i, \ell_j) \textrm{ s.t. } x \in A  )  \notag \\
&= 1 - e^{-\lambda_o ( w_i, \ell_j ) | A | } = 1 - e^{-\lambda_o ( w_i, \ell_j ) \cdot 2 w_i \ell_j }, 
\end{align}
where in the second equality we have used the fact that the number of obstacles whose centroid points are located in $A$ is Poisson distributed with mean $\lambda_o ( w_i, \ell_j ) | A |$.

\begin{figure}
\centering
\includegraphics[width=8cm]{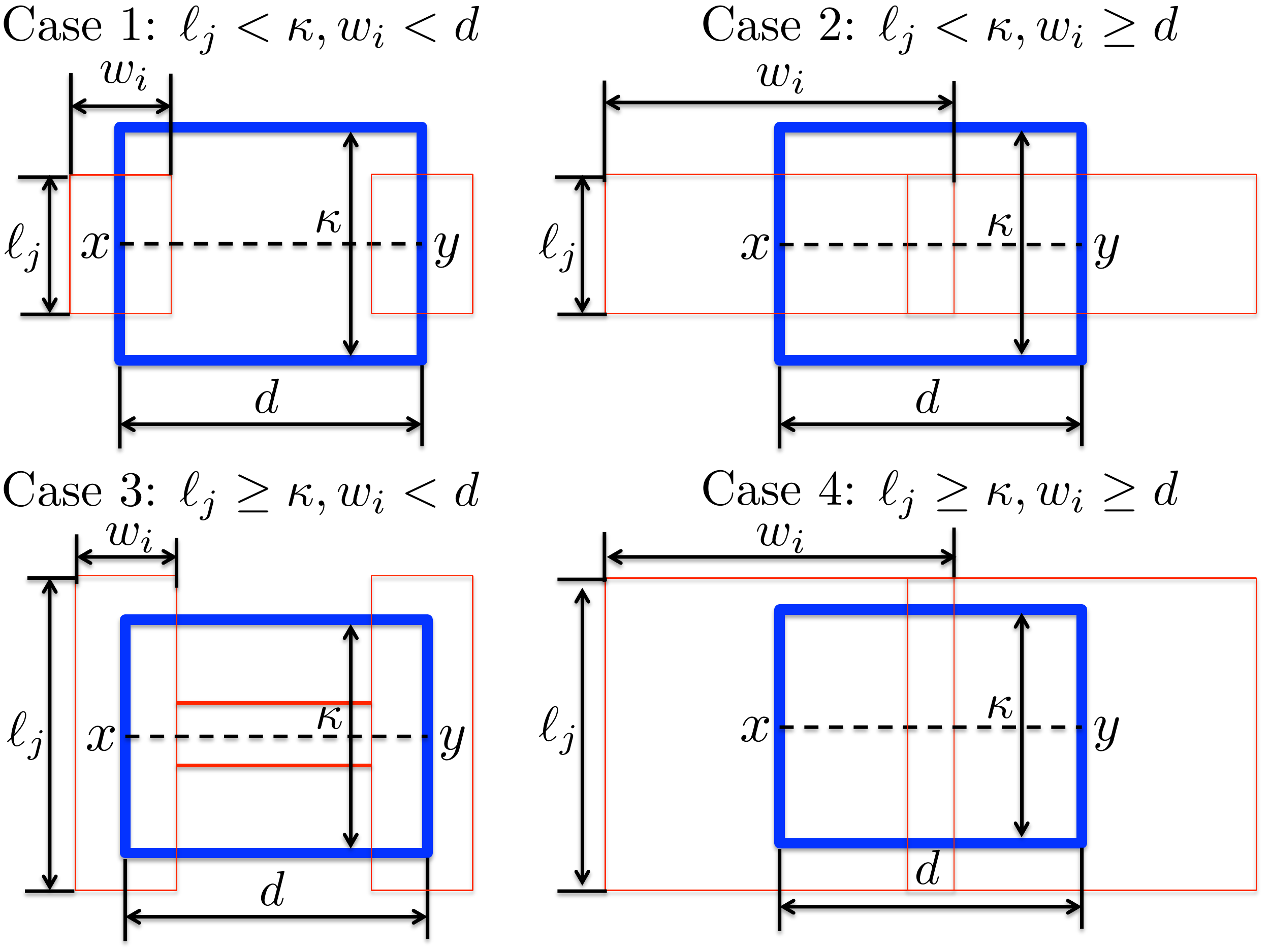}
\caption{Geometric illustration for the proof of Theorem \ref{thm:1}.}
\label{fig:2}
\end{figure}

\textbf{Case 2:}  $\ell_j < \kappa, w_i \geq d$. The analysis of this case is similar to Case 1. The difference is that the intersection of the two red rectangles now is not empty due to $w_i \geq d$. As shown in the upper-right plot of Fig. \ref{fig:2}, the area of the union of the two red rectangles equals $( w_i +d )\ell_j$ . It follows that
\begin{align}
\mathbb{P} ( \tilde{B}_{i,j} |  \ell_j < \kappa, w_i \geq d) 
&= 1 - e^{-\lambda_o ( w_i, \ell_j ) \cdot ( w_i +d )\ell_j }.
\end{align}

\textbf{Case 3:}  $\ell_j \geq \kappa, w_i < d$. As $\ell_j \geq \kappa$, a single obstacle can ``cut off" the rectangular strip $R_\kappa (x,y)$. It can be verified that if the centroid point of any obstacle is located in the red region shown in the bottom-left plot of Fig. \ref{fig:2}, the blockage occurs. The region has area $(\ell_j - \kappa) (d - w_i) + 2 w_i \ell_j$. It follows that 
\begin{align}
& \mathbb{P} ( \tilde{B}_{i,j} |  \ell_j \geq \kappa, w_i < d) 
=  1 - e^{-\lambda_o ( w_i, \ell_j ) \cdot (\ell_j - \kappa) (d - w_i) - \lambda_o ( w_i, \ell_j ) \cdot 2 w_i \ell_j  }.
\end{align}

\textbf{Case 4:}  $\ell_j \geq \kappa, w_i \geq d$. This case can be analyzed along the same line as in Cases 2 and 3, leading to
\begin{align}
\mathbb{P} ( \tilde{B}_{i,j} |  \ell_j < \kappa, w_i \geq d) 
&= 1 - e^{-\lambda_o ( w_i, \ell_j ) \cdot ( w_i +d )\ell_j }.
\end{align}

Combining the above four cases yields 
\begin{align}
\mathbb{P} (\textrm{Blockage}) &\geq 1 - \prod_{i,j} \left( 1 - \mathbb{P} ( \tilde{B}_{i,j} ) \right)  = 1 - e^{ - \sum_{i,j} \lambda_o ( w_i, \ell_j ) | A_{i,j} |  } \notag \\
&= 1 - e^{ - \sum_{i,j} \lambda_o  f_W (w_i)  \Delta w_i  f_L (\ell_j)   \Delta \ell_j | A_{i,j} |  } ,
\end{align}
where 
\[ | A_{i,j} | = \left\{ \begin{array}{ll}
         ( w_i +d )\ell_j & \mbox{if $w_i \geq d$};\\
         2 w_i  \ell_j & \mbox{if $w_i < d$ and $\ell_j < \kappa$};\\
         (\ell_j - \kappa) (d - w_i) + 2 w_i \ell_j& \mbox{if $w_i < d$ and $\ell_j \geq \kappa$}.\end{array} \right. \]

Finally, we keep refining the partitions of the width interval $[w_{\min}, w_{\max})$ and the length interval $[\ell_{\min}, \ell_{\max})$ such that 
$\delta = \max ( \Delta w_i , i=1,...,m, \Delta \ell_j , j=1,...,n )$ goes to zero. Further using the fact $\mathbb{P} (\mathcal{E}) = 1-\mathbb{P} (\textrm{Blockage})$ completes the proof.

\section{Conclusions}

This paper has analyzed the connectivity of mmWave networks with multi-hop relaying. An interesting result is that relaying route window can be set to be about the size of the obstacles to obtain near-optimal connectivity. Note that with a large relaying route window, it is more likely that the mmWave signals take on a long route, leading to more total energy consumption. Future work could explore such a tradeoff between connectivity and energy consumption.

\bibliographystyle{IEEEtran}
\bibliography{IEEEabrv,Reference}

\end{document}